\documentclass[conference,letterpaper, 10pt]{IEEEtran}
\IEEEoverridecommandlockouts

\usepackage[utf8]{inputenc}
\usepackage[T1]{fontenc}
\usepackage[american]{babel}
\usepackage[center]{caption}
\usepackage{graphicx}
\usepackage{siunitx}
\usepackage{longtable,tabularx}
\usepackage{caption}
\usepackage{subfigure}
\usepackage{subcaption}
\usepackage{listings}
\usepackage{float}
\usepackage{amsmath,amssymb,exscale}
\usepackage{blindtext, graphicx}
\usepackage{verbatim}
\usepackage{algorithm}
\usepackage{algpseudocode}
\usepackage{fancyvrb}
\usepackage{bera}
\usepackage{amsmath}
\usepackage{amsthm}
\usepackage{amsfonts}
\newtheorem{problem}{Problem}
\newtheorem{theorem}{Theorem}
\usepackage{mathtools}
\usepackage{lipsum}
\usepackage[dvipsnames]{xcolor}
\usepackage{cuted}
\usepackage{flushend}
\usepackage{dblfloatfix} 
\usepackage{times}
\usepackage{placeins}

\newcommand{\flab}[1]{\label{fig:#1}}
\newcommand{\fig}[1]{Fig.\ref{fig:#1}}

\newcommand{\elab}[1]{\label{eqn:#1}}
\newcommand{\eqn}[1]{(\ref{eqn:#1})}

\usepackage{scalerel}

\usepackage[bookmarks=false, hidelinks]{hyperref}
\usepackage{cleveref}
\crefname{equation}{Eq.}{Eqs.} 

\crefname{figure}{Fig.}{Figs.}

\def\BibTeX{{\rm B\kern-.05em{\sc i\kern-.025em b}\kern-.08em
T\kern-.1667em\lower.7ex\hbox{E}\kern-.125emX}}

\makeatletter 
\let\old@ps@headings\ps@headings 
\let\old@ps@IEEEtitlepagestyle\ps@IEEEtitlepagestyle 
\def\confheader#1{% 
% for all pages except the first 
\def\ps@headings{% 
\old@ps@headings% 
\def\@oddhead{\strut\hfill#1\hfill\strut}% 
\def\@evenhead{\strut\hfill#1\hfill\strut}% 
}% 
% for the first page 
\def\ps@IEEEtitlepagestyle{% 
\old@ps@IEEEtitlepagestyle% 
\def\@oddhead{\strut\hfill#1\hfill\strut}% 
\def\@evenhead{\strut\hfill#1\hfill\strut}% 
}% 
\ps@headings% 
} 
\makeatother 

\usepackage[letterpaper, margin=.75in]{geometry}
\newgeometry{top=1in, bottom=0.75in, left=0.75in, right=0.75in}

\begin{document}

\title{Sparse Actuation for LPV Systems with Full-State Feedback in $\mathcal{H}_2/\mathcal{H}_\infty$ Framework}

\author{\IEEEauthorblockN{Tanay Kumar\thanks{Graduate Student, \texttt{ktanay@tamu.edu}} \hspace{0.5in} Raktim Bhattacharya \thanks{Professor, \texttt{raktim@tamu.edu}}\vspace{.1in} }
\IEEEauthorblockA{Aerospace Engineering, Texas A\&M University,\\ College Station, TX, 77843-3141.
}}

\maketitle

\begin{abstract}
This paper addresses the sparse actuation problem for nonlinear systems represented in the Linear Parameter-Varying (LPV) form. We propose a convex optimization framework that concurrently determines actuator magnitude limits and the state-feedback law that guarantees a user-specified closed-loop performance in the $\mathcal{H}_2/\mathcal{H}_\infty$ sense. We also demonstrate that sparse actuation is achieved when the actuator magnitude-limits are minimized in the $l_1$ sense. This is the first paper that addresses this problem for LPV systems. The formulation is demonstrated in a vibration control problem for a flexible wing.
\end{abstract}

\begin{IEEEkeywords}
Sparse Actuation, LPV Systems, $\mathcal{H}_2/\mathcal{H}_\infty$ Optimal Control, Convex Optimization, Robust Control
\end{IEEEkeywords}

\section{Introduction}
Sparse actuation is beneficial in aerospace systems due to the need to reduce weight, complexity, and power consumption, which are critical factors in these applications. By utilizing a limited number of actuators, sparse actuation minimizes the system's overall mass and energy demands while maintaining control performance. Additionally, sparse actuation facilitates the development of more efficient control strategies, particularly in resource-constrained systems. Also, sparse actuation architectures are helpful in improving system resilience by identifying the minimum actuation capability needed to maintain operational integrity in the event of faults. This approach enables aerospace systems to carry out critical functions even if specific actuators fail, ensuring continuous operation and safety. By concentrating on minimal actuation requirements, engineers can develop fault-tolerant control strategies that activate alternative control paths or modes, maximizing the use of available resources. This method protects the system against unexpected failures and contributes to optimizing maintenance and repair processes, which are essential for long-duration aerospace missions.

Sparse actuation architectures are also crucial in controlling high-dimensional systems. For example, in fluid dynamics \cite{CHEN_ROWLEY_2011}, structural vibration \cite{hiramoto_optimal_2000}, and thermal processes \cite{deshpande_battery_LCSSCDC2021}, sparse actuation is essential due to the inherent complexity of these applications. In fluid dynamics, strategically placed actuators can effectively manipulate critical flow modes, eliminating the need for extensive instrumentation. For structural vibration control, a few strategically positioned actuators can specifically attenuate key vibration modes, reducing the overall actuator count. In thermal processes, employing sparse actuation enables focused control on essential areas, achieving efficient temperature management with fewer actuators while preserving desired system performance.

Determining the optimal placement of sensors and actuators in complex systems, along with the appropriate control law, is challenging. Typically, the actuation architecture (locations and magnitude limits) is determined ad hoc, which may limit the closed-loop performance. Therefore, developing an integrated approach that concurrently determines the placement of sensors and actuators and formulates the control law is crucial. This holistic method is key to realizing a resource-efficient closed-loop system that meets performance demands.

\subsection{Related Works}  %Changin this to subsection
The issue of identifying a sparse set of actuators for closed-loop control in LTI systems has received significant attention. Traditional approaches to actuator selection in $\mathcal{H}_2/\mathcal{H}_\infty$   optimal state-feedback control often involve applying row-wise sparsity-inducing penalties to the controller gain matrix \cite{dhingra_admm_2014, zare_proximal_2020, munz_sensor_2014} or employing element-wise sparsity for structured controller designs \cite{jovanovic_controller_2016}. Additionally, integer programming and search algorithms have been utilized for addressing static output feedback control challenges \cite{nugroho_simultaneous_2018}, primarily aiming at system stabilization. Other methods consider the problem from the perspective of controllability and observability \cite{manohar_optimal_2020, summers_submodularity_2016}. The work in \cite{vedang} presents a convex optimization framework for designing full-state feedback controllers with bounds on actuator magnitude and rate limits.

\section{New Contributions}
This paper is the first to address the sparse actuation problem for \textit{nonlinear systems} represented in linear parameter varying (LPV) form. In contrast, previous research has focused exclusively on LTI systems. The formulation introduced here extends the framework in \cite{vedang} for LTI systems to LPV systems. This extension is particularly important for complex systems, such as aircraft, where dynamic properties vary with external factors like flight velocity and altitude.

We use the formulated theory to design a state-feedback controller to minimize the oscillations in a flexible wing due to external disturbances. Designing the actuator layout for such systems is nontrivial and often results in over-actuation or under-performing control systems. Using the proposed theory, we demonstrate how a state-feedback control can be designed with a nonlinear model to attenuate wing oscillations with a minimal number of actuators effectively. 

\section{Problem Formulation}
\subsection{Brief Overview of LPV Systems}
LPV control systems \cite{shamma1990analysis, shamma2012overview} represent an advanced control theoretical framework for controlling nonlinear systems through a set of linear controllers scheduled on real-time measurable parameters. The LPV approach evolves from traditional gain scheduling, which designs controllers at various fixed points and interpolates between them as parameters change, often without theoretical robustness guarantees. In contrast, the LPV framework incorporates parameter variations into the design, offering theoretical robustness guarantees.
Various methods for designing LPV controllers exist, including linear fractional transformations (LFT), single quadratic Lyapunov function (SQLF), and parameter-dependent quadratic Lyapunov function (PDQLF). These methods transform control design challenges into convex optimization tasks involving linear matrix inequalities (LMIs) \cite{el2000advances, helton2007linear}. A primary challenge within this framework is modeling a nonlinear system in LPV terms. For instance, the nonlinear equation \(\dot{x} = x^3\) can be represented as \(\dot{x} = A(\rho)x\) with \(A(\rho) = \rho\) and \(\rho(t) = x^2(t)\). When LPV models are affine in parameters, i.e., \(A(\rho) = A_0 + A_1\rho\), ensuring performance across all parameter variations is simpler, assuming \(\rho(t)\) lies within a convex polytope and constraints are applied at the polytope's vertices. However, if \(\rho(t)\)'s dependency is non-linear, solutions often involve randomized algorithms with probabilistic guarantees \cite{tempo2013randomized, fujisaki2003probabilistic}. Additional parameters are occasionally introduced to preserve an affine structure, increasing the computational demand as the problem size increases. The complexity of resolving linear matrix inequalities (LMIs) can rise to \(\mathcal{O}(n^6)\), where \(n\) represents the size of the problem. Additionally, constraining \(\rho(t)\) within a defined closed set presents its own set of challenges.

Despite these complexities, the LPV framework has seen significant success, particularly in the aerospace sector \cite{ganguli2002reconfigurable, balas2002linear, gilbert2010polynomial, marcos2009lpv}, demonstrating its utility in practical applications.

\subsection{Sparse Actuation for Full State-Feedback LPV Systems}
Let us consider the following LPV system,
\begin{equation}
\label{LPV}
\begin{aligned}
   &\dot{x}(t) = A(\rho)x(t) + B_u(\rho)u(t) + B_w(\rho)w(t), \\
   & z(t) = C_z(\rho)x(t) + D_u(\rho)u(t) + D_w(\rho)w(t),\\
   & u(t) = Kx(t),
\end{aligned}
\end{equation}
where $x\in\mathcal{R}^{N_x}$ is the state vector, $u\in\mathcal{R}^{N_u}$ is the control input vector, $w\in\mathcal{R}^{N_w}$ is the external disturbance that is always bounded, and $\rho$ is the time-varying parameter. Assuming that the matrix pair $(A(\rho), B_u(\rho))$ is stabilizable, we aim to design a full-state feedback controller $K$ such that the vector of regulated outputs $z\in\mathcal{R}^{N_z}$ is bounded under the influence of exogenous inputs $w$ and the control vector $u$ is sparse. Note that $u$ is the vector of all possible control inputs or actuators in the system. We can achieve a sparse actuation architecture by enforcing sparseness on $u$. 

The closed-loop system with state-feedback control law is given by
\begin{equation}
\label{close loop}
\begin{aligned}
   &\dot{x}(t) = (A(\rho) + B_u(\rho)K)x(t) + B_w(\rho)w(t), \\
   & z(t) = (C_z(\rho) + D_u(\rho)K)x(t) + D_w(\rho)w(t).
\end{aligned}
\end{equation}
The transfer function matrix from $w$ to $z$ of the closed-loop system in \cref{close loop}, which is represented by $\mathcal{G}_z(\rho,s)$, can be given as

\begin{align}
\mathcal{G}_z(\rho,s) &=  (C_z(\rho) + D_u(\rho)K)(sI-(A(\rho) \notag \\ & + B_u(\rho)K))^{-1}B_w(\rho) + D_w(\rho).
\end{align}
%\textcolor{red}{Should it be $\mathcal{G}_z(\rho,s)$ or $\mathcal{G}_z(s)$? I think the first one is correct}
The performance criteria on $z$ can be defined in terms of the $\mathcal{H}_2$ or $\mathcal{H}_\infty$ norm of $\mathcal{G}_z(\rho,s)$. Specifically, $\|\mathcal{G}_z(\rho,s)\|_{\mathcal{H}_2}<\gamma_0$ or $\|\mathcal{G}_z(\rho,s)\|_{\mathcal{H}_\infty}<\gamma_0$ is required for some $\gamma_0>0$.

To achieve sparseness in the control input $u$, the $\mathcal{H}_2$ norm of the transfer function from $w$ to each $u_i$ is minimized. This transfer function, represented by $\mathcal{G}_{u_i}(\rho,s)$, corresponds to the $i^{th}$ component of $u=\begin{bmatrix}u_1 & \cdots &u_{N_u}\end{bmatrix}^\top$. Using \cref{LPV}, $\mathcal{G}_{u_i}(s)$ can be expressed as,
\begin{equation}
    \mathcal{G}_{u_i}(\rho,s) = \text{row}_i(K)(sI-(A(\rho) + B_u(\rho)K))^{-1}B_w(\rho),
\end{equation}
where $\text{row}_i(K)\in \mathcal{R}^{1\times N_x}$ is the $i^{th}$ row of the controller $K$. A smaller value of $\|\mathcal{G}_{u_i}(\rho,s)\|_{\mathcal{H}_2}$ indicates that the contribution of the $i^{th}$ actuator to the peak magitude of control is small. If this norm value falls below a certain threshold (e.g., $10^{-3}$), it implies that the $i^{th}$ actuator may be redundant and could be removed from the control architecture. In essence, a vector with sparse $\mathcal{H}_2$ norm values from $w$ to $u_i$ corresponds to an actuation structure with fewer active actuators. This is the basis for the sparse controller design approach discussed below.

If $\sqrt{\gamma_i}\geq0$ is an upper bound on $\|\mathcal{G}_{u_i}(\rho,s)\|_{\mathcal{H}_2}$, which can be written as
\begin{align*}
&    \begin{bmatrix}
            \|\mathcal{G}_{u_1}(\rho,s)\|_{\mathcal{H}_2} & \|\mathcal{G}_{u_2}(\rho,s)\|_{\mathcal{H}_2} & ... &\|\mathcal{G}_{u_{N_u}}(\rho,s)\|_{\mathcal{H}_2}
    \end{bmatrix}^\top, \\
&   \text{ or }     \begin{bmatrix}
            \sqrt{\gamma_1} & \sqrt{\gamma_2} & ... & \sqrt{\gamma_{N_u}}
        \end{bmatrix}^\top =: \sqrt{\Gamma}    
\end{align*}

To achieve sparseness in $\Gamma$, we minimize the weighted $l_1$ norm of $\Gamma$ which is
\begin{equation}
    \|\Gamma\|_{l_1,\alpha} := \alpha^\top|\Gamma|,
\end{equation}
where $\alpha$ is the weight vector. Based on this, the control synthesis problem can be mathematically stated as,

\begin{problem}
    The $\mathcal{H}_\infty$ full-state feedback control problem can be expressed as the following optimization problem.
    \begin{equation}
    \left.
    \begin{aligned}
    & \min _{\Gamma, K}\|\Gamma\|_{l_1,\alpha}, \\
    \text { subject to } &              \left\|\mathcal{G}_z(\rho,s)\right\|_{\mathcal{H}_{\infty}}      <\gamma_0, \\
    & \left\|\mathcal{G}_{u_i}(\rho,s)\right\|_{\mathcal{H}_2}      <\sqrt{\gamma_i}, \quad i=1,2, \cdots, N_u.
\end{aligned}\right\}
\elab{prob hinf}
\end{equation}
\end{problem}

\begin{theorem}
\label{theorem 1}
   The solution to $\mathcal{H}_{\infty}$ full state-feedback design problem defined in \eqn{prob hinf} is obtained by solving the following optimization problem, where the controller gain is given by $K=W X^{-1}$.

\begin{equation}\left.
    \begin{aligned}   
    & \min _{\Gamma>0, X>0, W}\|\Gamma\|_{l_1,\alpha} \text { subject to } \\
     &\begin{bmatrix}
    M_{11}(\rho) & B_w(\rho) & \left(C_z(\rho) X(\rho)+D_u(\rho) W(\rho)\right)^{\top} \\
    * & -\gamma_0 I & D_w(\rho)^{\top} \\
    *  & * & -\gamma_0 I
     \end{bmatrix}<0,\\
    & M_{11}(\rho)+B_w(\rho) B_w(\rho)^{\top}<0, \\
    & \begin{bmatrix}
    -\gamma_i & \operatorname{row}_i(W(\rho)) \\
    * & -X(\rho)
    \end{bmatrix}<0, \quad i=1,2, \cdots, N_u;
    \end{aligned}\right\}
    \elab{hinf opt}
\end{equation} 
\end{theorem}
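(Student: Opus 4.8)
The plan is to translate each constraint of Problem \eqn{prob hinf} into a Lyapunov certificate and then linearize the resulting bilinear terms $B_u(\rho)K$ and $C_z(\rho)+D_u(\rho)K$ through a single congruence transformation together with the change of variables $W=KX$. Because the $\mathcal{H}_\infty$ objective and all $N_u$ of the $\mathcal{H}_2$ channel bounds are to be certified with one common Lyapunov matrix, the central idea is to arrange every certificate so that the \emph{same} matrix $X=P^{-1}$ appears, after which the substitution $K=WX^{-1}$ removes every product between decision variables at once.

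First I would dispatch the $\mathcal{H}_\infty$ constraint. Applying the Bounded Real Lemma to the closed-loop realization with $A_{cl}=A(\rho)+B_u(\rho)K$, $C_{cl}=C_z(\rho)+D_u(\rho)K$, input $B_w(\rho)$ and feedthrough $D_w(\rho)$ gives a matrix inequality in $P=P^\top>0$ that certifies $\|\mathcal{G}_z(\rho,s)\|_{\mathcal{H}_\infty}<\gamma_0$. Performing a congruence with $\mathrm{diag}(X,I,I)$, where $X=P^{-1}$, turns the $(1,1)$ block into $A(\rho)X+XA(\rho)^\top+B_u(\rho)W+W^\top B_u(\rho)^\top=:M_{11}(\rho)$ once I set $W=KX$, while the off-diagonal blocks collapse to $B_w(\rho)$ and $(C_z(\rho)X+D_u(\rho)W)^\top$. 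This is exactly the first LMI of \eqn{hinf opt}, and since congruence preserves definiteness the two statements are equivalent.

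Next I would certify each $\mathcal{H}_2$ channel. Writing $\|\mathcal{G}_{u_i}(\rho,s)\|_{\mathcal{H}_2}^2=\text{row}_i(K)\,L_c\,\text{row}_i(K)^\top$ in terms of the controllability Gramian solving $A_{cl}L_c+L_cA_{cl}^\top+B_wB_w^\top=0$, I overbound $L_c$ by the \emph{same} $X$: the strict inequality $M_{11}(\rho)+B_w(\rho)B_w(\rho)^\top<0$ — the second constraint of \eqn{hinf opt} — together with stability of $A_{cl}$ (which follows since $M_{11}(\rho)<0$) forces $X\succ L_c$, so $\|\mathcal{G}_{u_i}\|_{\mathcal{H}_2}^2\le \text{row}_i(K)\,X\,\text{row}_i(K)^\top=\text{row}_i(W)X^{-1}\text{row}_i(W)^\top$. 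Requiring this below $\gamma_i$ and applying a Schur complement to $X$ yields precisely the third family of LMIs. Since $\Gamma>0$ makes $\|\Gamma\|_{l_1,\alpha}=\alpha^\top\Gamma$ linear, the whole problem is a semidefinite program and the gain is recovered as $K=WX^{-1}$.

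I expect the main obstacle to be the insistence on a single shared $X$: the Bounded Real certificate naturally lives in $P$, whereas the Gramian overbound and the channel Schur complements are natural in $X=P^{-1}$, and only by committing to one common Lyapunov matrix across the $\mathcal{H}_\infty$ and all $\mathcal{H}_2$ subproblems does $W=KX$ simultaneously linearize every bilinear product. This coupling is what buys convexity at the cost of conservatism, and in the LPV setting it additionally requires the inequalities to hold for all admissible $\rho$, which — for parameter-affine data over a polytope — reduces to enforcing them at the vertices.
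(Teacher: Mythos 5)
Your proposal is correct and takes essentially the same route as the paper: the paper's proof simply defers to the LTI argument of \cite{vedang}, which is exactly the chain you reconstructed --- Bounded Real Lemma, congruence with $\mathrm{diag}(X,I,I)$, the change of variables $W=KX$, the Gramian overbound via $M_{11}(\rho)+B_w(\rho)B_w(\rho)^{\top}<0$, and the Schur complement for each channel --- applied pointwise in $\rho$. Your closing remarks on committing to a single shared Lyapunov matrix and on vertex enforcement for parameter-affine data are precisely the (unstated) ingredients that make the extension from the LTI result to the LPV setting go through.
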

where
\begin{align*}
M_{11}(\rho) := \operatorname{sym}\left(A(\rho) X(\rho)+B_u(\rho) W(\rho)\right). 
\end{align*}
and $\operatorname{sym}(.):= \frac{1}{2}[(.)+(.)^\top]$
\begin{proof}
Follows the steps for theorem 1 in \cite{vedang}.
\end{proof}

Alternatively, the performance of the closed-loop controller can be characterized using the $\mathcal{H}_2$ norm, as previously mentioned. The corresponding synthesis problem can be formulated as follows.
\begin{problem}
    The $\mathcal{H}_2$ full-state feedback control problem can be expressed as the following optimization problem.
    \begin{equation}
    \begin{aligned}
    & \min _{\Gamma, K}\|\Gamma\|_{l_1,\alpha} \\
    \text { subject to } &              \left\|\mathcal{G}_z(\rho,s)\right\|_{\mathcal{H}_2}      <\gamma_0 \\
    & \left\|\mathcal{G}_{u_i}(\rho,s)\right\|_{\mathcal{H}_2}      <\sqrt{\gamma_i}, \quad i=1,2, \cdots, N_u.   
\end{aligned}
    \elab{prob h2}
\end{equation}
\end{problem}

\begin{theorem}
\label{theorem 2}
   The solution to $\mathcal{H}_2$ full state-feedback design problem defined in \eqn{prob h2} is obtained by solving the following optimization problem, where the controller gain is given by $K=W X^{-1}$.
\begin{equation}\left.
    \begin{aligned}   
    & \min _{\Gamma>0, X>0, W}\|\Gamma\|_{l_1,\alpha} \text { subject to } \\
    & \begin{bmatrix}
    -Z(\rho) & C_z(\rho)X(\rho) +D_u(\rho)W(\rho) \\
    * & -X(\rho)
     \end{bmatrix}<0, \\
    & \operatorname{sym}\left(A(\rho) X(\rho)+B_u(\rho) W(\rho)\right)+B_w(\rho) B_w(\rho)^{\top}<0 \text {, } \\
    & \operatorname{tr}(Z(\rho))<\gamma_0^2\\
    & \begin{bmatrix}
    -\gamma_i & W(\rho) \\
    * & -X(\rho)
    \end{bmatrix}<0, \quad i=1,2, \cdots, N_u
    \end{aligned}\right\}
    \elab{h2 opt}
    \end{equation} 
\end{theorem}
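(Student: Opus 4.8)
The plan is to mirror the derivation used for the $\mathcal{H}_\infty$ result in \cref{theorem 1} (and hence \cite{vedang}), replacing the bounded-real-lemma characterization by the standard $\mathcal{H}_2$ (controllability-Gramian) characterization and then linearizing through the same change of variables. Writing the closed-loop matrices from \cref{close loop} as $A_{cl}(\rho):=A(\rho)+B_u(\rho)K$ and $C_{cl}(\rho):=C_z(\rho)+D_u(\rho)K$, I would first recall that (with $D_w(\rho)=0$, as required for a finite continuous-time $\mathcal{H}_2$ norm) the bound $\|\mathcal{G}_z(\rho,s)\|_{\mathcal{H}_2}<\gamma_0$ holds iff there exist $P(\rho)>0$ and a symmetric slack $Z(\rho)$ with
\begin{align*}
A_{cl}P + P A_{cl}^\top + B_w B_w^\top &< 0, \\
Z - C_{cl} P C_{cl}^\top &> 0, \qquad \operatorname{tr}(Z) < \gamma_0^2,
\end{align*}
the second inequality being the Schur complement of the $(Z,P)$ block appearing in \eqn{h2 opt}.

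These conditions are bilinear in the pair $(K,P)$ because $A_{cl}$ and $C_{cl}$ contain $K$. The central step is to set $X(\rho):=P(\rho)$ and introduce $W(\rho):=K X(\rho)$, so that $K=WX^{-1}$ is well defined precisely because $X>0$; this substitution makes the map $(K,P)\mapsto(W,X)$ a bijection on the feasible set and turns $A_{cl}X=AX+B_uW$ and $C_{cl}X=C_zX+D_uW$ into affine expressions. The Lyapunov inequality then reduces to the symmetric-part constraint $\operatorname{sym}(AX+B_uW)+B_wB_w^\top<0$, and applying the Schur complement to $Z - (C_{cl}X)\,X^{-1}(C_{cl}X)^\top>0$ recovers the first block LMI of \eqn{h2 opt}; together with $\operatorname{tr}(Z)<\gamma_0^2$ these reproduce the first three constraints.

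For the sparsity-inducing constraints I would treat each actuator channel $u_i=\operatorname{row}_i(K)x$ as a fictitious regulated output with no feedthrough, driven by $B_w w$ through the \emph{same} closed-loop dynamics. Because the Gramian inequality $A_{cl}X+XA_{cl}^\top+B_wB_w^\top<0$ is common to all output channels, a single Lyapunov certificate $X$ simultaneously bounds every $\|\mathcal{G}_{u_i}(\rho,s)\|_{\mathcal{H}_2}^2 \le \operatorname{row}_i(K)\,X\,\operatorname{row}_i(K)^\top$; substituting $\operatorname{row}_i(K)=\operatorname{row}_i(W)X^{-1}$ yields $\operatorname{row}_i(W)X^{-1}\operatorname{row}_i(W)^\top<\gamma_i$, whose Schur complement is the last block LMI of \eqn{h2 opt} (with $\operatorname{row}_i(W(\rho))$ occupying the off-diagonal block). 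Minimizing $\|\Gamma\|_{l_1,\alpha}$ over $(X>0,W,Z,\Gamma)$ subject to these now-convex constraints is therefore equivalent to \eqn{prob h2}, and $K=WX^{-1}$ is the recovered gain.

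The main obstacle I anticipate is not any single algebraic manipulation — each is routine congruence/Schur-complement bookkeeping — but justifying that one shared $X$ may certify the $\mathcal{H}_2$ performance of $z$ and the $\mathcal{H}_2$ norms of all $N_u$ actuator channels at once, which is exactly what keeps the problem jointly convex in $(X,W,Z,\Gamma)$ after the change of variables. Secondarily, the LPV aspect requires these matrix inequalities to hold for every admissible $\rho$; I would close that gap by restricting to parameter-affine data on a convex polytope and enforcing the LMIs at its vertices (the SQLF/PDQLF route noted earlier), so that convexity extends the guarantee to all $\rho$ in the hull.
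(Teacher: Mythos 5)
Your proposal is correct and follows essentially the same route as the paper, whose proof simply defers to the LTI derivation in \cite{vedang}: the Gramian-based $\mathcal{H}_2$ characterization, the change of variables $W=KX$ with a single shared Lyapunov certificate $X$ across the performance channel and all actuator channels, Schur complements to obtain the block LMIs, and vertex enforcement over the parameter polytope for the LPV extension. Your observations that the last LMI should read $\operatorname{row}_i(W(\rho))$ rather than $W(\rho)$, and that $D_w(\rho)=0$ is needed for a finite $\mathcal{H}_2$ norm, are both accurate refinements of the statement as printed.
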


\begin{proof}
Follows the steps for theorem 2 in \cite{vedang}.
\end{proof}

\subsection{Iterative Reweighted $l_1$ Minimization}
We apply the iterative reweighting approach described in \cite{candes2008enhancing} to obtain a sparse actuation configuration. This involves repeatedly solving the optimization problems \eqn{hinf opt} or \eqn{h2 opt}, with the weights for the $(j+1)^{th}$ iteration calculated based on the previous iteration as:
\begin{equation}
    \alpha_i^{j+1} = \left( \epsilon + |\Gamma_i^j|\right)^{-1},
\end{equation}
where $\epsilon>0$ is a small value to ensure that the weights remain well-defined. The initial weight vector is considered equal to 1 to maintain generality. The iteration process terminates when either the convergence criterion is satisfied, or the maximum number of iterations is reached. The final solution is refined by eliminating the actuators with very small magnitudes and re-solving the optimization problem in \eqn{hinf opt} or \eqn{h2 opt} with equal weights.

\subsection{System Model}
In this section, we apply the principles outlined in Theorems 1 and 2 to reduce oscillations in an aircraft wing that experiences external disturbances. The wing is modeled as a series of bars connected longitudinally by torsional springs, as shown in \fig{diagram}. Each bar is equipped with an actuator capable of exerting force, such as through aerodynamic control surfaces. We model disturbances as forces impacting each bar. Both control and disturbance forces are assumed to be point forces acting perpendicularly at the center of mass of each bar. We have omitted the effects of gravity from the dynamics to simplify the LPV problem formulation. Forces due to gravity can be considered as external disturbances. The restoring torque of the torsional spring is defined as \(\tau = -k_1\theta - k_2\theta^3\), where \(\theta\) represents the angular deviation from the spring's rest position, assumed to be \(0^\circ\) in this analysis.
\begin{figure}[h!]
\centering
  \includegraphics[scale= 1]{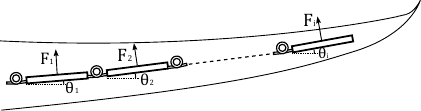}
  \caption{Schematic of a flexible wing, modeled as a series of bars and rotational springs.}
  \flab{diagram}
\end{figure}
The degrees of freedom of the system are defined by the angles each cantilever makes with the horizontal ($\theta_i$). Therefore, the system has $N_u$ degrees of freedom. Correspondingly the state vector $x\in\mathcal{R}^{2N_u}$ consists of angles ($\theta$) and angular velocities ($\dot{\theta}$). Each cantilever can produce a force normal to its plane so the control input is $u\in\mathcal{R}^{N_u}$. The wind disturbances on the wing are in the form of forces normal to the wing surface. Therefore, disturbance acts along the same direction as the control input $u$. The dynamics of the system under consideration are derived using the Lagrangian mechanics. The objective is to develop a feedback control law with a minimum number of actuated cantilevers so that the wing experiences minimal oscillations under unsteady wind loading. It is important to emphasize that while it is feasible to design a controller that maintains the wing's angle without any alterations, such an approach is strongly discouraged from a structural standpoint due to the potential for excessive strain. Consequently, our objective is focused solely on reducing oscillations in response to external disturbances.

The equation of motion (EOM) derived is non-linear in $\theta$. Therefore, the system can be converted to LPV form, with time-varying parameters $\rho$ that are the function of the states of the system itself, i.e. $\rho_i = f(\theta)$. Such LPV models are called \textit{quasi}-LPV models.

For a system with $n$ masses, the LPV system parameters ($\rho$) are chosen as:

\begin{equation}
    \begin{aligned}
        & \rho_i = \theta_i^2, \quad i=1,2, \cdots, N_x\\
      %  & \rho_{N_x+i} = \theta_i\theta_{i+1}, \quad i=1,2, \cdots, N_x-1
    \end{aligned}
\end{equation}

The EOM for this problem is nonlinear as the states appear in quadratic form, which allows for quasi-LPV representation. The resulting LPV system has $N_x$ parameters and is affine in nature. The physical parameters of the system are given in \autoref{tab:Params}.

\begin{table}[ht]
\caption{Physical Parameters.}
\centering
\begin{tabular}{|c|c|}
  \hline
  \textbf{Parameter} & \textbf{Value} \\ \hline
    $m$ (cantilever mass) & 1.5 kg  \\ \hline
    $n$ (number of cantilevers) & 5  \\ \hline
    $l$ (length of a cantilever) & 1 m \\ \hline
    %$c$ (damping coefficient) & 5 N/m \\ \hline
    $k_1$ & 10 N-m/rad\\ \hline
    $k_2$ & 1.5 N-m/rad$^3$ \\ \hline
  \end{tabular}
\label{tab:Params}
\end{table}

\section{Results}
\label{Results}

In this section, we discuss the controller performance from the $\mathcal{H}_2$ and $\mathcal{H}_\infty$ optimal designs. The feedback controller is derived based on two slightly different system models. The first model assumes a linear spring (where $k_2=0$), leading to an LTI system. The second model accounts for a nonlinear spring, resulting in an LPV system. The controller's performance is validated by testing it on nonlinear system dynamics subject to external disturbances. In the following figures, the nonlinear simulation results of the closed-loop system are presented, where the controller designed based on the LTI system model is labeled as the `LTI Model', and the controller designed using the LPV system model is labeled as the `LPV Model'. The disturbance is introduced through a non-zero initial condition and a slowly varying wind gust, represented as $d(t) = 0.3 wn(t) + \sin{0.005t}$, where $wn$ is white noise with a uniform distribution.

\subsection{Sparse Actuation with $\mathcal{H}_\infty$ Full-State Feedback}
Here, we address the $\mathcal{H}_\infty$ full-state feedback control design aimed at constraining the perturbations in the wing shown in \cref{fig:diagram} caused by wind disturbances. \cref{fig:Hinf_theta_UB14} and \cref{fig:Hinf_theta_dot_UB14} present a comparison between the state trajectories of the open-loop and closed-loop simulation. A stable closed-loop performance can be noticed as opposed to the open-loop case. Additionally, significant difference can be seen between the closed-loop response of LTI and LPV models as the LTI models has higher overshoot and settling time. 

\begin{figure}[ht]
\centering
  \includegraphics[scale= 0.425]{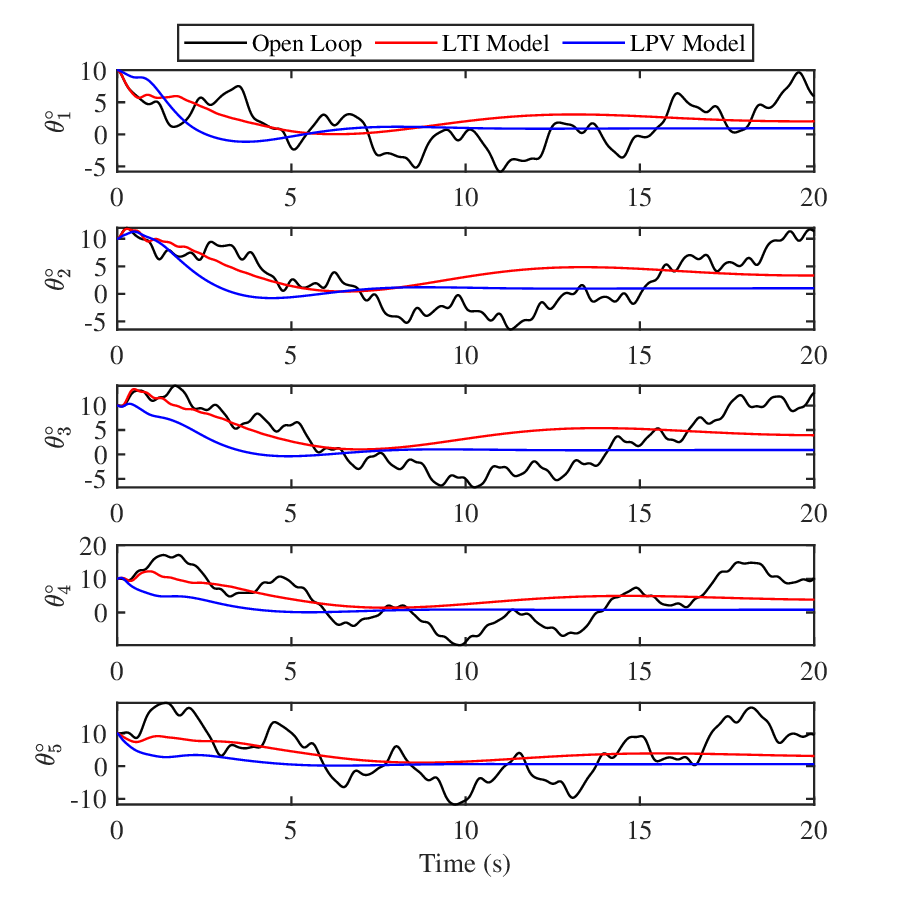}
  \caption{Angular deflection trajectories from nonlinear simulations for the given $\mathcal{H}_\infty$ performance $\gamma_0=0.15$.}
  \label{fig:Hinf_theta_UB14}
\end{figure}

\begin{figure}[ht]
\centering
  \includegraphics[scale= 0.425]{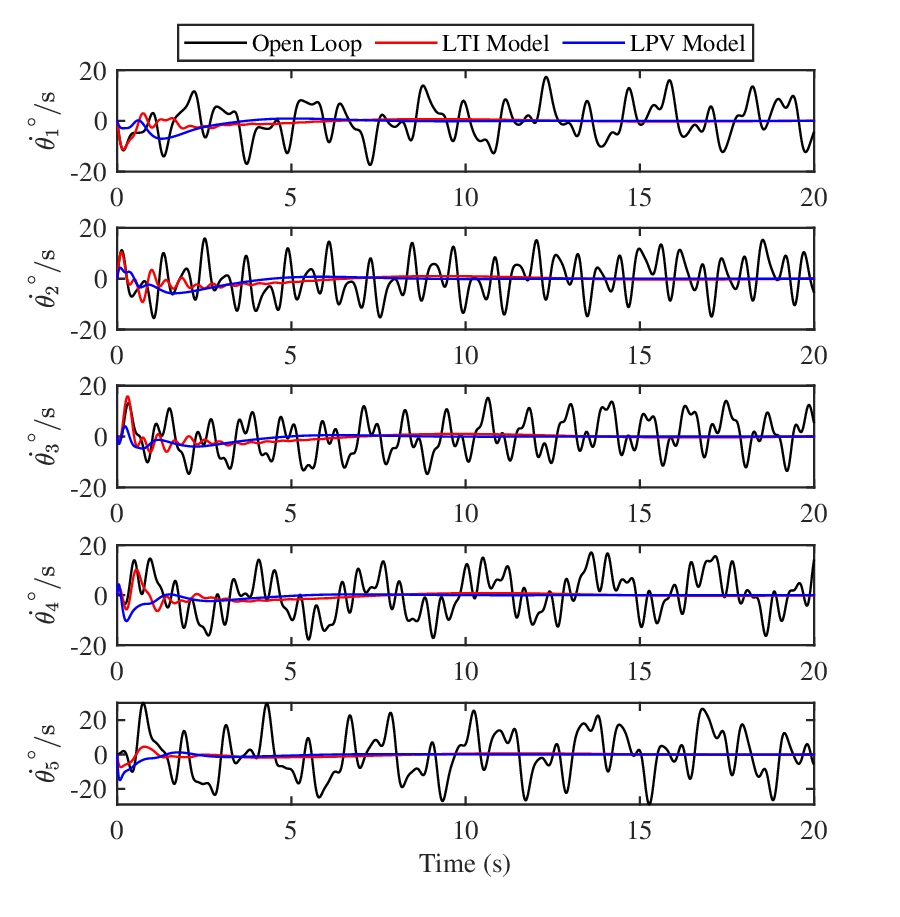}
  \caption{Angular deflection rate trajectories from nonlinear simulations for the given $\mathcal{H}_\infty$ performance $\gamma_0=0.15$.}
  \label{fig:Hinf_theta_dot_UB14}
\end{figure}

\cref{fig:Hinf_gamma0_UB14} and \cref{fig:Hinf_gamma0_UB8} illustrate the $\|u_i(t)\|_\infty$ achieved from the $\mathcal{H}_\infty$ optimal design for a specific value of $\gamma_0$ and varying values of $\gamma_{UB}$. The parameter $\gamma_{UB}$ imposes an upper bound on the control input to emulate hardware limitations. We observe from \cref{fig:Hinf_gamma0_UB14}, that for the controller derived using LPV system mode, actuators 1 is not required to achieve the desired closed-loop disturbance attenuation due to a negligible value of $\|u_i(t)\|_\infty$. Additionally, the control inputs of actuators 2, 3, and 4 are relatively small. Although the LTI model results in a higher degree of sparseness, the controller performance is subpar as discussed in \cref{fig:Hinf_theta_UB14} and \cref{fig:Hinf_theta_dot_UB14} as the controller does not account for the  underlying nonlinearities in the system dynamics..

\begin{figure}[ht]
\centering
  \includegraphics[scale= 0.5]{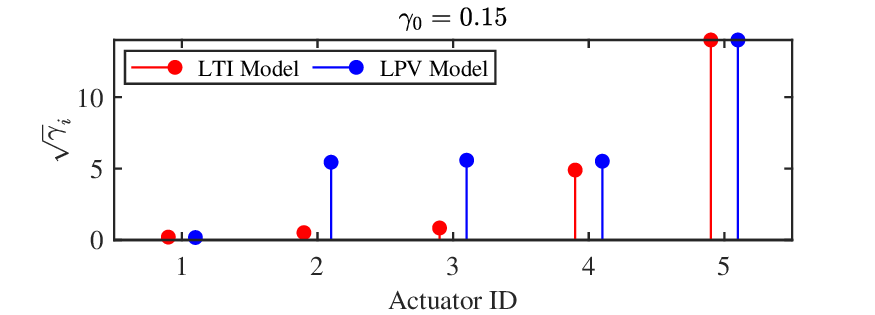}
  \caption{Minimum control efforts from nonlinear simulations for a given $\mathcal{H}_\infty$ performance with $\sqrt{\gamma_{UB}} = 14$.}
  \label{fig:Hinf_gamma0_UB14}
\end{figure}

As more stringent constraints are placed on the upper bound $\gamma_{UB}$, the control distribution adapts accordingly as expected. As observed in \cref{fig:Hinf_gamma0_UB8}, each actuator has an increased contribution of control inputs to meet similar performance requirements as that in the previous case. This highlights the inherent trade-off between actuator sparsity and the maximum allowable control magnitude.

\begin{figure}[ht]
\centering
  \includegraphics[scale= 0.5]{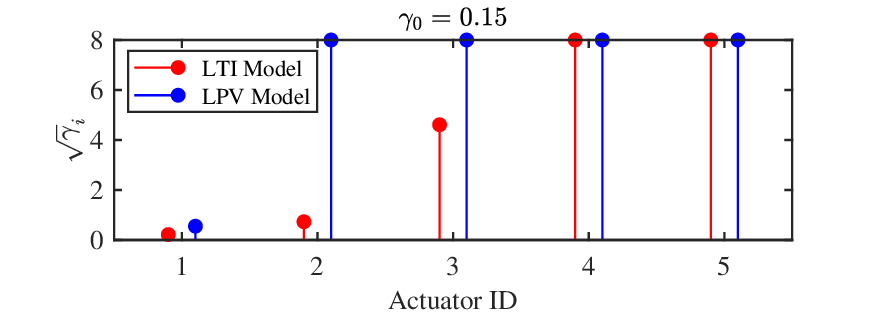}
  \caption{Minimum control efforts from nonlinear simulations for a given $\mathcal{H}_\infty$ performance with $\sqrt{\gamma_{UB}} = 8$.}
  \label{fig:Hinf_gamma0_UB8}
\end{figure}

\subsection{Sparse Actuation with $\mathcal{H}_2$ Full-State Feedback}

Here, we address the $\mathcal{H}_2$ full-state feedback disturbance rejection problem. \cref{fig:H2_theta_UB8} and \cref{fig:H2_theta_dot_UB8} present a comparison between the state trajectories of the open-loop and closed-loop simulations. It is observed that the settling time and overshoot for the system is higher than that in the case of $\mathcal{H}_\infty$ optimal control. This is because, unlike $\mathcal{H}_2$ formulation, $\mathcal{H}_\infty$ explicitly considers the worst-case disturbances and provides more robust performance, which inherently results in less oscillatory behavior under external perturbations or uncertainties.

\begin{figure}[ht]
\centering
  \includegraphics[scale= 0.425]{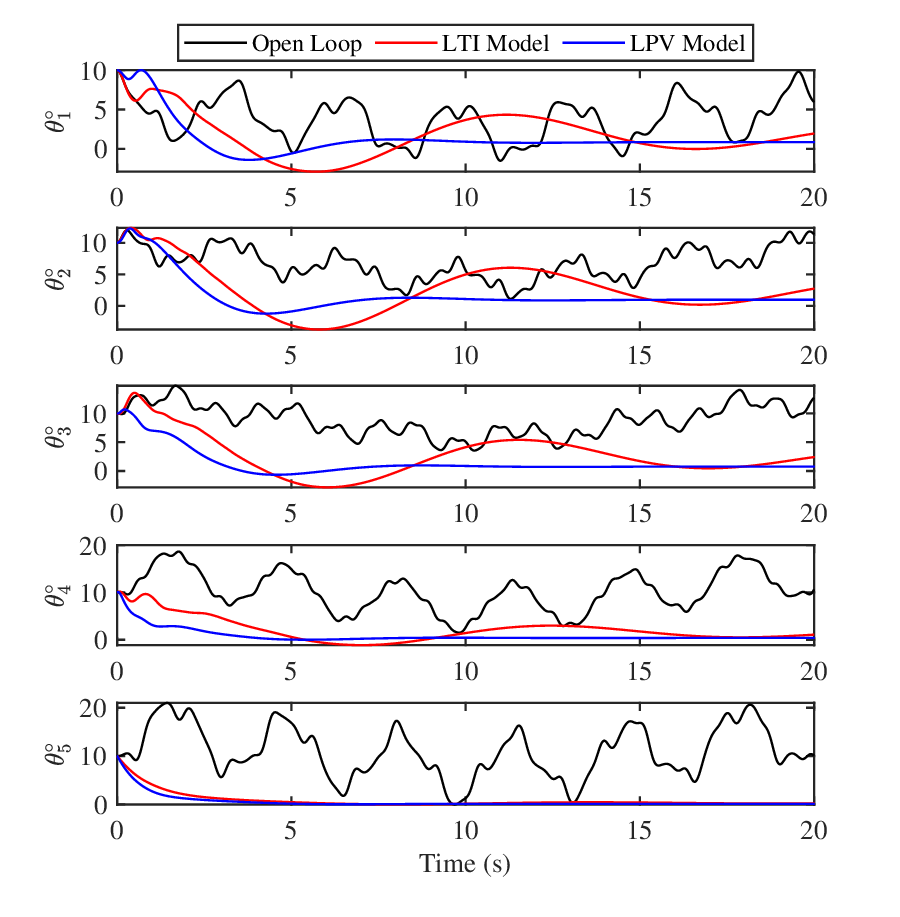}
  \caption{Angular deflection trajectories from nonlinear simulations for the given $\mathcal{H}_2$ performance $\gamma_0=0.15$.}
  \label{fig:H2_theta_UB8}
\end{figure}

\begin{figure}[ht]
\centering
  \includegraphics[scale= 0.45]{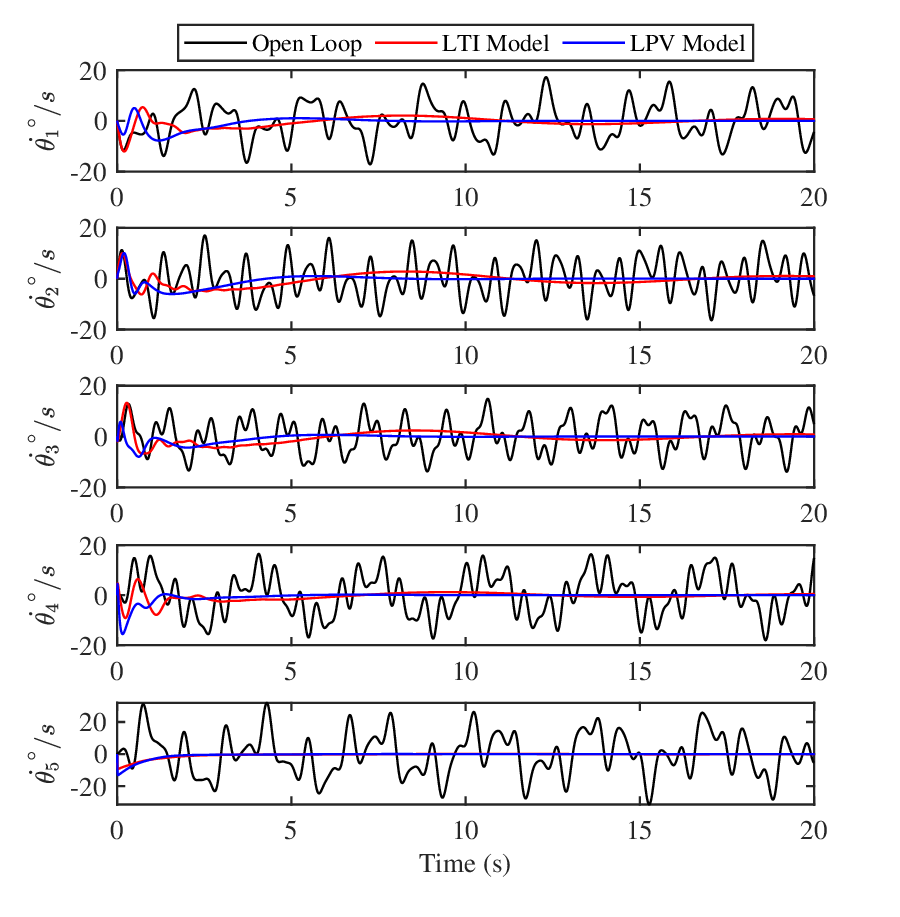}
  \caption{Angular deflection rate trajectories from nonlinear simulations for the given $\mathcal{H}_2$ performance $\gamma_0=0.15$.}
  \label{fig:H2_theta_dot_UB8}
\end{figure}

\cref{fig:H2_gamma0_UB14} and \cref{fig:H2_gamma0_UB8} illustrate the $\|u_i(t)\|_\infty$ achieved from the $\mathcal{H}_2$ optimal design for a specific value of $\gamma_0$ and varying values of $\gamma_{UB}$. We observe from \cref{fig:H2_gamma0_UB14}, for $\mathcal{H}_2$ case, that actuators 4, and 5 contribute majorly to the control inputs to achieve the desired closed-loop disturbance attenuation. The $\|u_i(t)\|_\infty$ for the remaining actuators is significantly lower.

\begin{figure}[ht]
\centering
  \includegraphics[scale= 0.5]{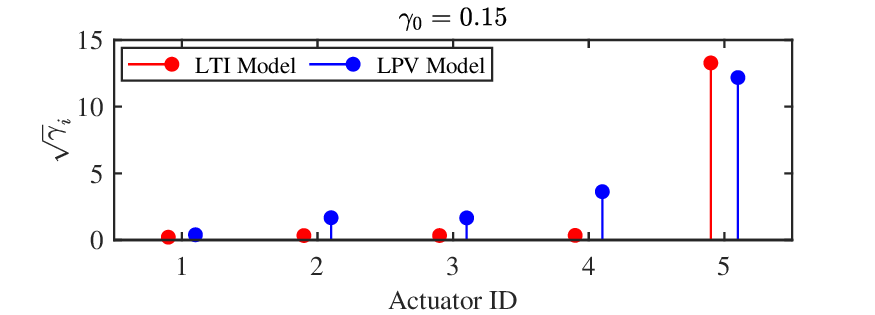}
  \caption{Minimum control efforts from nonlinear simulations for a given $\mathcal{H}_2$ performance with $\sqrt{\gamma_{UB}} = 14$.}
  \label{fig:H2_gamma0_UB14}
\end{figure}

With more stringent constraints imposed on the upper bound $\gamma_{UB}$, the control distribution adjusts accordingly, similar to $\mathcal{H}_\infty$ case. As observed in \cref{fig:H2_gamma0_UB8}, no additional actuators are required to achieve the same level of performance. However, the contribution of only actuator 4 increases significantly. This is due to the fact that $\mathcal{H}_2$ control focuses on minimizing the mean-square response of the system output to disturbances, aiming to reduce the average energy of the output signal. Consequently, while a higher degree of actuation sparsity is observed, it comes at the cost of a longer settling time.

\begin{figure}[ht]
\centering
  \includegraphics[scale= 0.5]{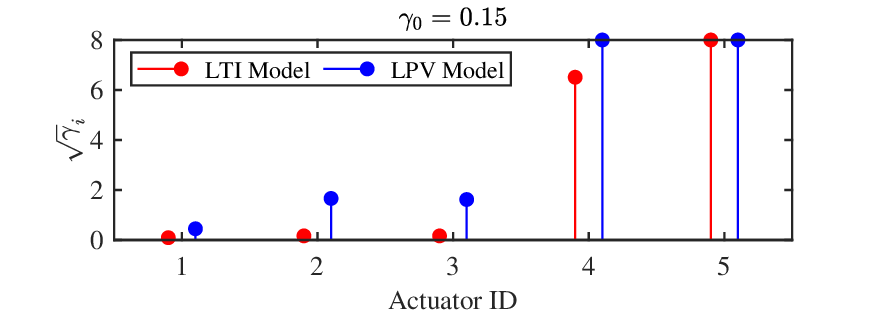}
  \caption{Minimum control efforts from nonlinear simulations for a given $\mathcal{H}_2$ performance with $\sqrt{\gamma_{UB}} = 8$.}
  \label{fig:H2_gamma0_UB8}
\end{figure}

%\FloatBarrier

\section{Conclusions}
In conclusion, this paper introduced novel convex optimization formulations for the codesign of actuation architecture and control laws, guaranteeing the specified $\mathcal{H}_2/\mathcal{H}_\infty$ closed-loop performance in nonlinear systems modeled in LPV form. Detailed in \cref{theorem 1} and \cref{theorem 2}, these formulations were applied to the vibration control of a flexible wing modeled nonlinearly. Our approach could effectively identify a sparse actuation architecture by minimizing the one-norm of actuator magnitude limits while maintaining desired closed-loop performance. This was verified through nonlinear simulations, confirming that actuators with zero magnitude limits could be eliminated without performance loss. Furthermore, our comparisons revealed that sparse solutions based on LTI models were inferior to those employing LPV models, highlighting the importance of incorporating nonlinearities within the design framework.

\bibliographystyle{IEEEtran}
\bibliography{refs,refs2}

\end{document}